\documentclass[sigconf]{acmart}
\settopmatter{printacmref=false} 
\renewcommand\footnotetextcopyrightpermission[1]{} 
\AtBeginDocument{%
  }

\usepackage{graphicx}
\usepackage{subcaption}
\usepackage{multirow}
\usepackage[linesnumbered,ruled]{algorithm2e}
\usepackage{amsmath}
\allowdisplaybreaks
\begin{document}

\setlength{\abovedisplayskip}{1.5pt}
\setlength{\belowdisplayskip}{1.5pt}
\setlength{\abovedisplayshortskip}{1.5pt}
\setlength{\belowdisplayshortskip}{1.5pt}

\setlength{\topsep}{0pt}
\setlength{\parskip}{0pt}

\title{Addressing Personalized Bias for Unbiased Learning to Rank}

\author{Zechun Niu}
\affiliation{%
  \department{Gaoling School of Artificial Intelligence}
  \institution{Renmin University of China}
  \city{Beijing}
  \country{China}
}
\email{niuzechun@ruc.edu.cn}
\orcid{0009-0009-7954-3713}

\author{Lang Mei}
\affiliation{%
  \department{Gaoling School of Artificial Intelligence}
  \institution{Renmin University of China}
  \city{Beijing}
  \country{China}
}
\email{meilang2013@ruc.edu.cn}
\orcid{0000-0002-7960-3036}

\author{Liu Yang}
\affiliation{%
  \department{Gaoling School of Artificial Intelligence}
  \institution{Renmin University of China}
  \city{Beijing}
  \country{China}
}
\email{yangliusuper6@gmail.com}
\orcid{0000-0002-0154-9957}

\author{Ziyuan Zhao}
\affiliation{%
  \institution{Search Algorithm Group, WeChat, Tencent}
  \city{Guangzhou}
  \state{Guangdong}
  \country{China}
}
\email{joshuazhao@tencent.com}
\orcid{0009-0000-4742-4116}

\author{Qiang Yan}
\affiliation{%
  \institution{Search Algorithm Group, WeChat, Tencent}
  \city{Guangzhou}
  \state{Guangdong}
  \country{China}
}
\email{rolanyan@tencent.com}
\orcid{0000-0002-7328-2278}

\author{Jiaxin Mao}
\authornote{Corresponding author.}
\affiliation{%
  \department{Gaoling School of Artificial Intelligence}
  \institution{Renmin University of China}
  \city{Beijing}
  \country{China}
}
\email{maojiaxin@gmail.com}
\orcid{0000-0002-9257-5498}

\author{Ji-Rong Wen}
\affiliation{%
  \department{Gaoling School of Artificial Intelligence}
  \institution{Renmin University of China}
  \city{Beijing}
  \country{China}
}
\email{jrwen@ruc.edu.cn}
\orcid{0000-0002-9777-9676}

\renewcommand{\shortauthors}{Zechun Niu et al.}

\begin{abstract}
  Unbiased learning to rank (ULTR), which aims to learn unbiased ranking models from biased user behavior logs, plays an important role in Web search. Previous research on ULTR has studied a variety of biases in users' clicks, such as position bias, presentation bias, and outlier bias. However, existing work often assumes that the behavior logs are collected from an ``average'' user, neglecting the differences between different users in their search and browsing behaviors. In this paper, we introduce personalized factors into the ULTR framework, which we term the user-aware ULTR problem. Through a formal causal analysis of this problem, we demonstrate that existing user-oblivious methods are biased when different users have different preferences over queries and personalized propensities of examining documents. To address such a personalized bias, we propose a novel user-aware inverse-propensity-score estimator for learning-to-rank objectives. Specifically, our approach models the distribution of user browsing behaviors for each query and aggregates user-weighted examination probabilities to determine propensities. We theoretically prove that the user-aware estimator is unbiased under some mild assumptions and shows lower variance compared to the straightforward way of calculating a user-dependent propensity for each impression.\footnote{In this paper, an ``impression'' refers to an occurrence of a document in a session.} Finally, we empirically verify the effectiveness of our user-aware estimator by conducting extensive experiments on two semi-synthetic datasets and a real-world dataset.
\end{abstract}

\begin{CCSXML}
<ccs2012>
<concept>
<concept_id>10002951.10003317.10003338.10003343</concept_id>
<concept_desc>Information systems~Learning to rank</concept_desc>
<concept_significance>500</concept_significance>
</concept>
</ccs2012>
\end{CCSXML}

\ccsdesc[500]{Information systems~Learning to rank}

\keywords{Unbiased Learning to Rank; Personalized Bias; Inverse Propensity Score}

\maketitle

\section{Introduction}
\begin{figure}[t]
  \centering
  \vspace{-3mm}
  \includegraphics[width=0.90\linewidth,trim=200 200 200 150, clip]{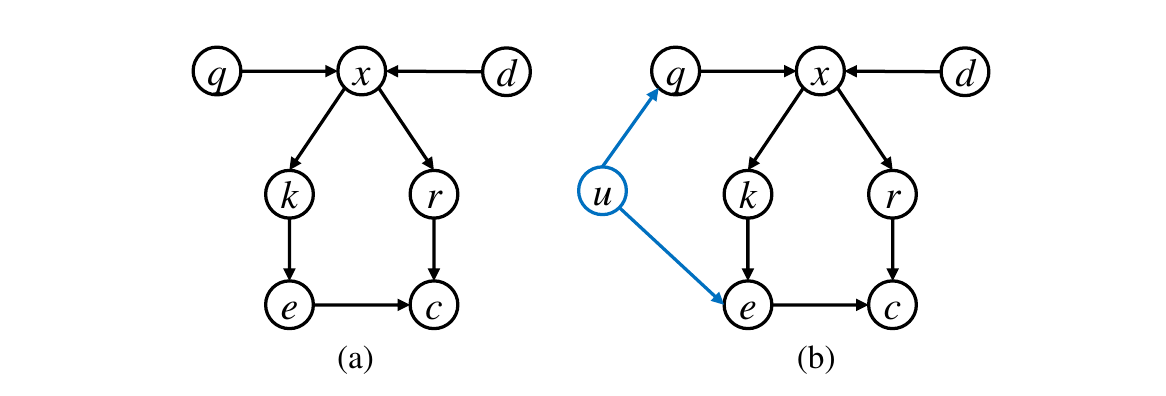}
  \vspace{-6mm}
  \caption{(a) The causal graph of existing ULTR methods \cite{wang2016ips, joachims2017ips} for generating observational click data. (b) The causal graph of the user-aware click generation process, where the existence of $u$ creates an additional backdoor path from $e$ to $c$: $e \leftarrow u \rightarrow q \rightarrow x \rightarrow r \rightarrow c$. The meanings of the variables are as follows. $q$: query, $x$: query-document features, $d$: document, $k$: position in the ranked list, $r$: relevance judgment, $e$: examination, $c$: click, $u$: user.}
  \label{fig:causal_graphs}
  \vspace{-6mm}
  \Description{This figure shows the causal graphs.}
\end{figure}

Hundreds of millions of users interact with Web search engines every day, generating massive amounts of user behavior data such as clicks and dwell time. User behavior data are valuable because they can implicitly reflect the relevance of the documents returned by the search engines. By leveraging user behavior logs, we can train a ranking model without expensive human relevance annotations. However, the user behavior data are intrinsically biased \cite{joachims2005accurately, joachims2007evaluating, o2006modeling, yue2010beyond}. For example, documents that are not clicked are not necessarily irrelevant; they may just not be examined by users. Therefore, directly using user click data as a supervision signal for training ranking models can lead to suboptimal performance. To solve this problem, many unbiased learning to rank (ULTR) models \cite{wang2016ips, joachims2017ips, ai2018dla, wang2018regression-EM, vardasbi2020affine, niu2025distributionally} have been proposed to learn an unbiased ranking model from biased user click logs. 

Typically, ULTR models are based on an examination hypothesis (EH) \cite{richardson2007ExminationHypothesis} that a user would click a document if and only if he/she both examines it and perceives it as relevant to the query, as shown in Figure~\ref{fig:causal_graphs}(a) where $c$ is affected by $e$ and $r$. In other words, the behavior of clicking can be broken down into two parts: examination and relevance judgment. The key idea of ULTR is to remove the impact of unevenly distributed examination on click-or-not signals to mitigate biases. Specifically, the ULTR models first calculate the probabilities of documents being examined in the search sessions based on certain user behavior assumptions.\footnote{In this paper, a ``session'' refers to the set of events between a user's issuing a query and abandoning the search result page. Note that the session here applies only to one query, not a set of queries related to the same information need.} Then, they utilize the inverse propensity score (IPS) \cite{ipw} method to reweigh the clicked documents with the reciprocal of their examination probabilities (called ``propensities'') to recover the true relevance. It can be proved that the IPS method can obtain an unbiased estimate of any ranking objective when the user behavior assumptions are correct and the propensity estimation is accurate
 \cite{joachims2017ips}.

Previous work on ULTR has studied a variety of biases involving different factors that affect users' examination probabilities, such as position \cite{joachims2017ips}, presentation features \cite{mao2022wholepageULTR}, and outlier \cite{sarvi2023outlierbias}. However, existing ULTR methods often implicitly assume that each interaction record comes from an ``average'' user and ignore the differences in users' search and browsing behaviors. Specifically, they do not explicitly model the idiosyncrasies of users and utilize the same user behavior model to compute the propensities for all sessions from different users. On the contrary, in real-world Web search scenarios, different users usually issue different queries because they have personalized interests and information needs \cite{sharifpour2023analysis-of-query-logs} or use different devices (e.g., computers or mobile phones) \cite{montanez2014cross-device_search}. Besides, Zhang et al. \cite{zhang2022globalorlocalCM} showed that different users also browse the search engine result pages in personalized ways. For example, some users are severely affected by the position and only focus on the top documents, while some users are more patient and will examine documents further down the list. To the best of our knowledge, few existing studies have investigated whether and how the differences between different users would affect the unbiasedness and variance of existing ULTR methods.

To fill the gap, in this paper, we aim to study the ULTR problem that involves users' personalized behaviors, i.e., the user-aware ULTR problem. To clearly formulate this problem, we first draw a new causal graph for the generation process of user behavior data as shown in Figure~\ref{fig:causal_graphs}(b), where the user variable (marked in blue) is included and affects both the query and examination variables. Through causal analysis, we show that the combination of users' personalized behaviors of issuing queries and examining the search results creates an additional backdoor path from examination to click, leading to a personalized bias that is not yet addressed by existing user-oblivious ULTR methods. Intuitively, we can mitigate personalized bias via a straightforward approach that calculates a user-dependent propensity for each impression and utilizes the IPS method to estimate learning-to-rank objectives. However, this straightforward estimator suffers from high variance when some users' examination probabilities are relatively small.

To solve the user-aware ULTR problem, we propose a novel user-aware inverse-propensity-score estimator for learning-to-rank objectives that balances the examination distributions over documents under each query rather than under each session. Specifically, the user-aware estimator models the user distributions under different queries and the examination probabilities of different users, and for each query aggregates the user-weighted examination probabilities as the propensities. We theoretically prove that the user-aware estimator is unbiased if each relevant document has a non-zero probability of being examined. Besides, we theoretically show that the user-aware estimator has a lower variance than the straightforward estimator mentioned above. 

We conduct extensive experiments on two semi-synthetic datasets and a real-world dataset to verify the effectiveness of our user-aware estimator. The results show that the user-aware estimator is significantly superior to the existing ULTR methods when facing personalized bias, and outperforms the straightforward estimator, especially when there are fewer behavior logs. 

The contributions of this paper are summarized as follows:
\begin{itemize}
\item To the best of our knowledge, we are the first to formulate the problem of user-aware unbiased learning to rank from a causal perspective and identify an unresolved personalized bias caused by the combination of users' personalized behaviors of issuing queries and examining documents.
\item To mitigate the personalized bias, we propose a novel user-aware inverse-propensity-score estimator for the objective functions of learning-to-rank models that balances the examination distributions over documents under each query. 
\item We theoretically prove that the user-aware estimator is unbiased under some mild assumptions and shows lower variance compared to the straightforward estimator.
\item We empirically verify that our user-aware estimator is superior to the existing ULTR methods and the straightforward estimator when facing personalized bias.
\end{itemize}

\vspace{-4mm}
\section{Unbiased Learning to Rank}

\subsection{Traditional Learning to Rank}
Traditional learning to rank (LTR) \cite{liu2009learningtorank} uses human relevance annotations to train ranking models. Let $q$ be a query from a set of training queries $Q$, let $S$ be a ranking model, and let $l(S|q)$ be a defined loss function over $S$ and $q$. The goal of LTR is to minimize the empirical risk \cite{Vapnik1995empiricalrisk} of $S$ over $Q$:
\begin{align}
    \mathcal{L}(S) = \frac{1}{|Q|}\sum_{q \in Q}l(S|q)
\end{align}
The ranking loss $l(S|q)$ is usually related to a certain ranking metric (e.g., ARP, nDCG\cite{jarvelin2002cumulated}, and ERR \cite{chapelle2009expected}). Let $\pi_q$ be the ranked list produced by $S$ for query $q$, let $d$ be a document in $\pi_q$, and let $y$ be the relevance of $d$ to $q$. Then, $l(S|q)$ often takes the form of a sum over documents:
\begin{equation}
\label{eq:unbiased_estimator}
    l(S|q) = \sum_{d\in \pi_q}\lambda(d|\pi_q) \cdot \sigma(y(d)),
\end{equation}
where $\lambda(d|\pi_q)$ and $\sigma(y(d))$ are usually chosen for a specific ranking metric. For example, if we use the Average Relevance Position (ARP) metric, then $\lambda(d|\pi_q)$ should be the position of the document in the ranked list, i.e., $k(d|\pi_q)$ and $\sigma(y(d))$ should be $y(d)$. 
\vspace{-3mm}

\subsection{Counterfactual Learning to Rank}
\label{sec:CLTR}
Since getting human relevance annotations is expensive and time-consuming, learning a ranking model from logged user behavior data (e.g., clicks) becomes an alternative. Given a set of logged sessions $N$ for query $q$ collected from past interactions between a user and the online search system, a naive way to estimate the local ranking loss is to directly utilize clicks as relevance labels: 
\begin{align}
\label{eq:naive_estimator}
\begin{split}
\hat{l}_{naive}(S|q,N_q) =\frac{1}{|N_q|}\sum_{\vec{c} \in N_q} \hat{l}_{naive}(S, q|\vec{c}) \\
\hat{l}_{naive}(S| q, \vec{c}) = \sum_{d\in \pi_q}\lambda(d|\pi_q) \cdot c(d) ,
\end{split}
\end{align}
where $\vec{c}$ denotes a logged click list.

Unfortunately, users' clicks are not equivalent to true relevance because they are also affected by the exposure of the documents according to the widely adopted examination hypothesis (EH):
\begin{equation}
\label{eq:EH}
    c(d) = e(d) \cdot r(d),
\end{equation}
where $c$, $e$, and $r$ are binary variables denoting whether a document $d$ is clicked, examined, and judged as relevant, respectively. For simplicity, in the theoretical derivation of this paper, we suppose that $\sigma(y(d))$ is also binary and complies with users' relevance judgment $r$. Hence, the ideal loss in Eq.~(\ref{eq:unbiased_estimator}) can be rewritten as:
\begin{equation}
\label{eq:ideal_estimator}
    l_{ideal}(S|q) = \sum_{d\in \pi_q}\lambda(d|\pi_q) \cdot r(d)
\end{equation}
Note that this assumption does not lose generality because it is easy to extend to any case where the true relevance $y$ is mapped to the probability of being perceived as relevant $P(r=1)$ in a one-to-one manner. Because the examination distributions over the documents in the click logs are usually uneven, clicks are biased toward examinations. For example, users are more inclined to examine the top-ranked documents, which would lead to a position bias \cite{craswell2008experimental}. Therefore, the naive way of Eq.~(\ref{eq:naive_estimator}) would obtain biased estimates.

To mitigate the position bias in users' clicks, Wang et al. \cite{wang2016ips} and Joachims et al. \cite{joachims2017ips} proposed the first counterfactual learning to rank (CLTR) method based on Inverse Propensity Score (IPS). They implicitly modeled the generation process of click data with the causal graph shown in Figure~\ref{fig:causal_graphs}(a).\footnote{The original papers \cite{wang2016ips, joachims2017ips} did not explicitly draw a causal graph and did not elaborate on the debiasing theory from the perspective of causal inference.} In the causal graph, $x$ represents the ranking features for the query-document pair, including the features about the query $q$, the document $d$, and the match between $q$ and $d$. The path $x \rightarrow k$ stands because the online production ranker takes the ranking features as input and then decides the positions of the candidate documents in the logged list $R_q$.\footnote{Following the settings of Joachims et al. \cite{joachims2017ips}, we assume that the online production ranker is deterministic and always returns a unique ranked list for each query.} Besides, they adopted the position-based click model (PBM) \cite{craswell2008experimental} and assumed that examination $e$ is only affected by position $k$. The mechanism that causes position bias is the existence of the backdoor path $e \leftarrow k \leftarrow x \rightarrow r \rightarrow c$. To reveal true relevance from clicks, they proposed the IPS-PBM \cite{wang2016ips, joachims2017ips} estimator to answer such a question: for a document in a specific search session, would it be clicked if it were examined? Given a logged click list $\vec{c}$ for query $q$, IPS-PBM estimates the local ranking loss as:
\begin{equation}
\label{eq:IPS_PBM_estimator}
\begin{split}
    \hat{l}_{IPS-PBM}(S|q, \vec{c}) = \sum_{d\in \pi_q} \frac{\lambda(d|\pi_q)c(d)}{P(e(d)=1|k(d|R_q))},
\end{split}
\end{equation}
where $k(d|R_q)$ denotes the position of the document in the logged ranked list $R_q$.\footnote{Note that $R_q$ is returned by the online production ranker and $\pi_q$ is returned by $S$.} For simplicity, we will abbreviate $k(d|R_q)$ as $k(d)$ in the following paper. By using the position-dependent examination probabilities $P(e(d)=1|k(d))$ as the propensities, IPS-PBM cuts off the path from $k$ to $e$, thereby blocking the backdoor path $e \leftarrow k \leftarrow x \rightarrow r \rightarrow c$. Joachims et al. \cite{joachims2017ips} proved that the IPS-PBM estimator is unbiased if the examination probability of each relevant document is greater than zero. 

Researchers have extended the above IPS-based CLTR method to address many other biases in users' click logs, such as presentation bias \cite{mao2022wholepageULTR} and outlier bias \cite{sarvi2023outlierbias}, by conditioning examination probabilities also on other factors (e.g., outliers and other SERP features). However, they do not consider the differences between users. That is, they all implicitly assume that the click logs were generated by an ``average'' user. Taking the IPS-PBM \cite{wang2016ips,joachims2017ips} method as an example, it first uses an online randomization experiment to estimate the overall examination probabilities for different positions. Then, it leverages these overall estimates as propensities for all the clicks that are actually from different users. Therefore, the existing user-oblivious ULTR methods lack a theoretical guarantee of unbiasedness when users have personalized behaviors.

\section{User-aware Unbiased Learning to Rank}
In this paper, we introduce users' personalized search and browsing behaviors into the ULTR framework, which we term the user-aware ULTR problem. 

\begin{figure}[t]
  \centering
  \vspace{-3mm}
  \includegraphics[width=\linewidth,trim=100 200 100 150, clip]{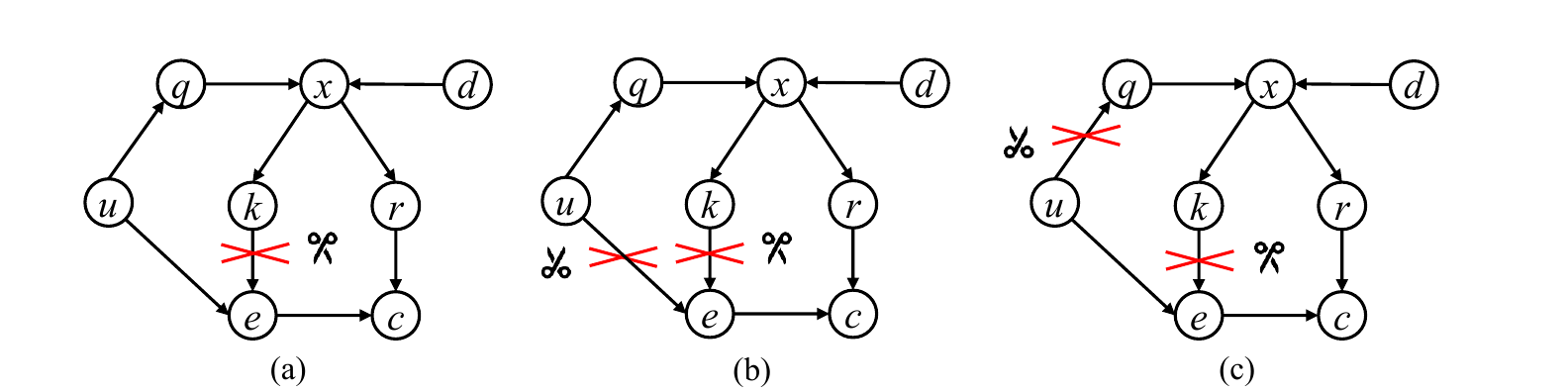}
  \vspace{-8mm}
  \caption{(a) (b) (c) Illustrations of the causal interventions simulated by IPS-PBM \cite{wang2016ips,joachims2017ips}, the straightforward estimator, and our user-aware estimator, respectively. The meanings of the variables are the same as those in Figure~\ref{fig:causal_graphs}.}
  \label{fig:causal_graphs_ipsupbm}
  \vspace{-6.5mm}
  \Description{This figure shows the causal interventions.}
\end{figure}

\subsection{User-oblivious Estimators Suffer from Personalized Bias}
\label{sec:personalized-bias}
To clearly formulate the user-aware ULTR problem, we first propose an augmented causal graph to model the generation process of users' click logs, as shown in Figure~\ref{fig:causal_graphs}(b). Since previous research \cite{montanez2014cross-device_search, sharifpour2023analysis-of-query-logs, zhang2022globalorlocalCM} suggest that users have personalized behaviors of issuing queries and examining documents, there are paths from $u$ to $q$ and from $u$ to $e$ (marked in blue) in the causal graph. Besides, we assume that each user follows a PBM model when examining the documents, so examination $e$ is only affected by user $u$ and position $k$. Note that we use PBM only for simplicity and alignment with previous work \cite{wang2016ips, joachims2017ips}, and this assumption can be easily extended to other click models, such as DCM\cite{guo2009dcm} and UBM\cite{dupret2008ubm}, or even situations where different users follow different click models. In addition, we follow the conventions in Web search and ULTR to assume that users have the same relevance judgments for given query-document pairs, so $r$ is only affected by $x$. Although it is also natural to assume that different users have different relevance judgments in some scenarios, we leave it to future work. 

From the causal graph, we can see that the existence of $u$ creates an additional backdoor path from $e$ to $c$: $e \leftarrow u \rightarrow q \rightarrow x \rightarrow r \rightarrow c$. Since IPS-PBM uses the overall position-dependent examination probabilities as propensities, it can be rewritten into:
\begin{equation}
    \hat{l}_{IPS-PBM}(S|q,\vec{c}) = \sum_{d\in \pi_q} \frac{\lambda(d|\pi_q)c(d)}{\sum_{u \in U} P(e(d)=1|k(d),u)P(u)},
\end{equation}
where $U$ represents the whole user set. That is, IPS-PBM only balances the distribution of examination to position and cuts off the path $k \rightarrow e$ as shown in Figure~\ref{fig:causal_graphs_ipsupbm}a. Therefore, IPS-PBM can not block the additional backdoor path caused by $u$ and thus suffers from an unsolved bias in the user-aware ULTR problem, which we term \emph{personalized bias}. 

To better show that the user-oblivious IPS-PBM does suffer from personalized bias, we introduce a simple example. Suppose that there is a query $q'$ issued by only one user and the user's examination probabilities for positions differ from the average of all users. For instance, this user has a probability of $0.9$ to examine the first ranked document and $0.5$ for the second, while the average of all users is $0.9$ and $0.4$. In this case, when estimating the local loss for $q'$, IPS-PBM would utilize the wrong propensity for the second-ranked document and yield biased relevance estimates. 

\textbf{Remark}. Users' personalized search behaviors \cite{montanez2014cross-device_search, sharifpour2023analysis-of-query-logs} play an important role in the establishment of personalized bias. If users only have personalized browsing behaviors but no personalized search behaviors (i.e., there is no path from $u$ to $q$ in the causal graph), then the backdoor path $e \leftarrow u \rightarrow q \rightarrow x \rightarrow r \rightarrow c$ does not exist and IPS-PBM is unbiased.


\subsection{Straightforward Estimator}
A straightforward way to address the personalized bias is to condition each document's examination probability in each session not only on its position but also on the specific user. From a causal view, it answers the same counterfactual question as that of IPS-PBM mentioned in Section~\ref{sec:CLTR}: for a document in a specific search session, would it be clicked if it were examined (by the user corresponding to this session)? That is, we can first identify the user corresponding to each session and then utilize his/her examination probabilities as the propensities:
\begin{equation}
\label{eq:straightforward estimator}
\begin{split}
    \hat{l}_{straight}(S|q, \vec{c}, u) = \sum_{d\in \pi_q} \frac{\lambda(d|\pi_q)c(d)}{P(e(d)=1|k(d),u)} \quad,
\end{split}
\end{equation}
where $u$ represents the user of the session. This straightforward estimator can be regarded as a simple extension of IPS-PBM that considers the specific user corresponding to each session. Following Joachims et al. \cite{joachims2017ips}, it is easy to prove that the straightforward estimator is unbiased if each relevant document has a positive probability of being examined in each session. However, it suffers from high variance. Given the logged session $N_q$, its variance is:
\begin{align}
\label{eq:variance_straight}
    \mathbb{V}& \left[ \hat{l}_{straight}(S|q,N_q) \right]  = \frac{1}{|N_q|^2} \sum_{(\vec{c}, u) \in N_q}\mathbb{V} \left[\hat{l}_{straight}(S|q, \vec{c}, u) \right] \nonumber \displaybreak[1]\\
    &= \frac{1}{|N_q|^2} \sum_{(\vec{c}, u) \in N_q} \sum_{d\in \pi_q}  \mathbb{V} \left[  \frac{\lambda(d|\pi_q)c(d)}{P(e(d)=1|k(d),u)} \right] \nonumber \displaybreak[1]\\
    &= \frac{1}{|N_q|^2} \sum_{d\in \pi_q} \sum_{(\vec{c}, u) \in N_q} \frac{\lambda^{2}(d|\pi_q)}{(P(e(d)=1|k(d),u))^{2}}  \mathbb{V} \left[  c(d) \right] \nonumber \displaybreak[1]\\
    &= \frac{1}{|N_q|^2} \sum_{d\in \pi_q :r(d)=1} \sum_{(\vec{c}, u) \in N_q} \frac{\lambda^2(d|\pi_q) (1 - P(e(d)=1|k(d),u))}{P(e(d)=1|k(d),u)} \nonumber \displaybreak[1]
\end{align}
The last step utilizes the examination hypothesis in Eq.~(\ref{eq:EH}). If some users' examination probabilities for some documents $P(e(d)=1|k(d),u)$ are small, then the variance of this estimator will be large, which may distort the training of the ranking model.

\vspace{-4mm}

\subsection{User-aware Inverse-propensity-score Estimator}
\label{sec:user-aware estimator}
Now we describe our novel user-aware inverse-propensity-score estimator that is both unbiased and has a lower variance. The key idea of reducing variance while remaining unbiased is to mitigate the personalized bias by balancing the examination distributions over documents under each query rather than under each session. 

From a causal view, we answer a different question from that mentioned in Section~\ref{sec:CLTR}: would a document be clicked if we ``force'' it to be examined (regardless of the specific user or session)? For a certain query-document pair $x_0$, we infer its relevance as follows:
\begin{align}
    &P(r=1|x=x_0) = P(c=1|do(e=1),x=x_0) \nonumber \displaybreak[1]\\
    &= P(c=1|e=1,x=x_0) = \frac{P(c=1,e=1,x=x_0)}{P(e=1,x=x_0)} \nonumber \displaybreak[1]\\
    &=\frac{P(c=1,x=x_0)}{\sum_{u' \in U}P(e=1|x=x_0,u=u')P(u=u'|x=x_0)P(x=x_0)} \quad, \nonumber \displaybreak[1]
\end{align}
where the ``$do(\cdot)$'' operator means causal intervention, that is, we intervene to make the variable take a certain value rather than its observed value. The second step utilizes the second rule of $do$-calculus \cite{Judea1995docalculus}: because all the backdoor paths from $e$ to $r$ are blocked given the variable $x$, the $do(\cdot)$ operator can be removed. Building on the above causal inference, the local ranking loss can be estimated as follows:
\begin{align}
\begin{split}
    \hat{l}_{user-aware}(S|q,\vec{c}) = \sum_{d\in \pi_q} \frac{\lambda(d|\pi_q)c(d)}{\sum_{u \in U} P(e(d)=1|k(d),u)P(u|q)}
\end{split}
\end{align}

In the following, we theoretically prove that the user-aware estimator is unbiased if each relevant document has a positive probability of being examined. In addition, our user-aware estimator has a lower variance than the straightforward estimator.

\begin{theorem}(Unbiasedness). 
\label{theorem_user-aware_unbiasedness}
    Suppose $\forall d:r(d)=1, \sum_{u \in U}P(e=1|k(d),u)P(u|q) > 0$, then $\mathbb{E}_{u,e} \left[ \hat{l}_{user-aware}(S| q,\vec{c}) \right] = l_{ideal}(S|q)$
\end{theorem}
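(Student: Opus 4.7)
The plan is a direct expected-value computation that parallels the classical IPS-PBM unbiasedness argument (Joachims et al.~\cite{joachims2017ips}), but where the propensity is the \emph{marginal} examination probability $\bar{p}(d,q) := \sum_{u \in U} P(e(d)=1 \mid k(d), u) P(u \mid q)$ rather than a session-specific one. I first note that the randomness in $\vec{c}$ for a fixed query $q$ comes from two sources consistent with the causal graph in Figure~\ref{fig:causal_graphs}(b): the sampling of the user $u \sim P(u \mid q)$, and the sampling of the examinations $e(d) \sim P(e(d)=1 \mid k(d), u)$. Relevance $r(d)$ is, by the paper's modelling assumption, a deterministic function of $x$, hence constant once $q$ and $d$ are fixed.

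By linearity of expectation, it suffices to show, for each $d \in \pi_q$, that $\mathbb{E}_{u,e}\bigl[c(d)/\bar{p}(d,q)\bigr] = r(d)$. Substituting the examination hypothesis $c(d) = e(d)\cdot r(d)$ from Eq.~(\ref{eq:EH}) and pulling the deterministic factor $r(d)$ out of the expectation gives
\begin{equation*}
\mathbb{E}_{u,e}\!\left[\frac{c(d)}{\bar{p}(d,q)}\right] = \frac{r(d)}{\bar{p}(d,q)} \cdot \mathbb{E}_{u,e}[e(d)].
\end{equation*}
The remaining expectation is evaluated by marginalising over $u$: $\mathbb{E}_{u,e}[e(d)] = \sum_{u \in U} P(u \mid q)\, P(e(d)=1 \mid k(d), u) = \bar{p}(d,q)$. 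The two factors cancel exactly, which is the whole point of defining the propensity as the $u$-marginal under $P(u\mid q)$, so the per-document expectation reduces to $r(d)$. Summing against $\lambda(d \mid \pi_q)$ over $d \in \pi_q$ reproduces Eq.~(\ref{eq:ideal_estimator}).

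The only subtlety I expect to handle carefully is the division by $\bar{p}(d,q)$ when $r(d)=0$, since the theorem's positivity hypothesis only covers relevant documents. This is benign: for any irrelevant document $c(d) = e(d)\cdot 0 = 0$ almost surely, so the corresponding term contributes $0$ to the estimator and can simply be excluded from the sum (equivalently, we restrict the outer sum to $\{d : r(d)=1\}$ before taking the expectation, as was already done implicitly in the straightforward-estimator variance derivation in Eq.~(\ref{eq:variance_straight})). Thus the positivity assumption $\bar{p}(d,q) > 0$ for every relevant $d$ is exactly what is needed to make each nonzero term of the estimator well-defined, and no further assumption is required. Because the argument is a short, mechanical cancellation once the marginalisation over $u$ is written down, I do not anticipate any serious obstacle; the conceptual content lives entirely in choosing the propensity to match the data-generating marginal over users, which was already justified by the $do$-calculus derivation preceding the theorem.
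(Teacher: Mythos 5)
Your proposal is correct and follows essentially the same route as the paper's proof: substitute the examination hypothesis, marginalise over $u$ and $e$ so that the numerator produces exactly the user-weighted propensity $\sum_{u}P(e(d)=1\mid k(d),u)P(u\mid q)$, and cancel it against the denominator. Your additional remark about irrelevant documents contributing zero (so that positivity is only needed for relevant documents) is a small point of care that the paper leaves implicit, but it does not change the argument.
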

\begin{proof}
\begin{align}
    \mathbb{E}&_{u,e} \left[ \hat{l}_{user-aware}(S|q,\vec{c}) \right] \nonumber \displaybreak[1]\\
    &= \mathbb{E}_{u,e} \left[\sum_{d\in \pi_q} \frac{\lambda(d|\pi_q)c(d)}{\sum_{u' \in U} P(e(d)=1|k(d),u')P(u'|q)} \right] \nonumber \displaybreak[1]\\
    &= \sum_{d\in \pi_q} \mathbb{E}_{u,e} \left[ \frac{e(d) \lambda(d|\pi_q) r(d)}{\sum_{u' \in U} P(e(d)=1|k(d),u')P(u'|q)} \right] \nonumber \displaybreak[1]\\
    &= \sum_{d\in \pi_q}  \mathbb{E}_{e} \left[ \frac{\sum_{u \in U} e(d) \lambda(d|\pi_q) r(d) P(u|q)}{\sum_{u' \in U} P(e(d)=1|k(d),u')P(u'|q)} \right] \nonumber \displaybreak[1]\\
    &= \sum_{d\in \pi_q}  \frac{\sum_{u \in U}  P(e(d)=1|k(d),u) P(u|q)\lambda(d|\pi_q) r(d) }{\sum_{u' \in U} P(e(d)=1|k(d),u')P(u'|q)} \nonumber \displaybreak[1]\\
    &= \sum_{d\in \pi_q}  \lambda(d|\pi_q) r(d) = l_{ideal}(S, q) \nonumber \displaybreak[1]
\end{align}
The second step utilizes the examination hypothesis in Eq.~(\ref{eq:EH}).
\end{proof}

\begin{lemma} 
\label{lemma}
Suppose there is a finite set $ T = \{t_1, t_2, ..., t_m\}$, where $\forall i \in [1,m], 1 \geq t_i > 0$. We have $\sum_{i=1}^{m} \frac{1-t_i}{t_i} \geq \sum_{i=1}^{m} \frac{t_i(1-t_i)}{(\frac{1}{m}\sum_{i=1}^{m}t_i)^{2}}$. 
\end{lemma}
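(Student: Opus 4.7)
Denote $\bar{t}=\tfrac{1}{m}\sum_{i=1}^m t_i$. The plan is to multiply both sides by $\bar{t}^2$ to clear the denominator on the right, reducing the claim to
$$\bar{t}^{\,2}\sum_{i=1}^{m}\frac{1-t_i}{t_i}\;\geq\;\sum_{i=1}^{m} t_i(1-t_i).$$
A direct term-by-term comparison will not work, since $\tfrac{1-t_i}{t_i}-\tfrac{t_i(1-t_i)}{\bar t^{\,2}}$ is negative whenever $t_i>\bar t$; and the obvious Chebyshev bound on the weighted average of $t_i^2$ with weights $\tfrac{1-t_i}{t_i}$ only yields $\tfrac{1}{m}\sum t_i^2$, which is larger than $\bar t^{\,2}$ and therefore points the wrong way. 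So I would instead expand and look for a regrouping into manifestly non-negative pieces.

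The main step is algebraic. Expanding the difference of the two sides gives
$$\bar{t}^{\,2}\sum_i \tfrac{1}{t_i}\;-\;m\bar{t}^{\,2}\;-\;m\bar{t}\;+\;\sum_i t_i^{\,2},$$
which I would then regroup as
$$\Bigl(\bar{t}^{\,2}\sum_i \tfrac{1}{t_i}-m\bar{t}\Bigr)\;+\;\Bigl(\sum_i t_i^{\,2}-m\bar{t}^{\,2}\Bigr).$$
This isolates a harmonic-mean piece and a quadratic-mean piece, each of which is a standard mean inequality in disguise.

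To finish, I would bound the two brackets separately. The first is non-negative by the AM--HM inequality, since the hypothesis $t_i>0$ gives $\sum_i \tfrac{1}{t_i}\geq \tfrac{m^2}{\sum_i t_i}=\tfrac{m}{\bar{t}}$, so $\bar{t}^{\,2}\sum_i\tfrac{1}{t_i}\geq m\bar{t}$. The second is non-negative by Cauchy--Schwarz (equivalently AM--QM), since $\sum_i t_i^{\,2}\geq \tfrac{(\sum_i t_i)^2}{m}=m\bar{t}^{\,2}$. Adding these two gives the desired inequality, with equality exactly when all $t_i$ are equal.

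The hard part, as already flagged, will be the regrouping: the naive approaches all fail, and the result only becomes transparent once the difference is split into an AM--HM piece and an AM--QM piece. Neither half uses the upper bound $t_i\leq 1$, which suggests that the constraint $t_i\leq 1$ is not actually needed for the inequality itself (though it is of course natural in the application, where the $t_i$ are examination probabilities).
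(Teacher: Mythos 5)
Your proof is correct and rests on exactly the same two ingredients as the paper's own proof --- the AM--HM inequality $\sum_i \frac{1}{t_i} \ge \frac{m^2}{\sum_i t_i}$ and the Cauchy--Schwarz (AM--QM) bound $\left(\sum_i t_i\right)^2 \le m\sum_i t_i^2$; the paper merely applies them sequentially in a single chain of inequalities, whereas you clear the denominator and split the difference into two manifestly non-negative brackets. Your remark that the upper bound $t_i \le 1$ is never used applies equally to the paper's argument.
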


\begin{proof}
\begin{align}
    \sum_{i=1}^{m} \frac{1-t_i}{t_i} &= \sum_{i=1}^{m} \frac{1}{t_i} - m \geq \frac{m^2}{\sum_{i=1}^{m} t_i} - m \nonumber \displaybreak[1] \\
    & = \frac{m^2\sum_{i=1}^{m} t_i - m(\sum_{i=1}^{m} t_i)^2}{(\sum_{i=1}^{m} t_i)^2} \geq \frac{m^2\sum_{i=1}^{m} t_i - m^2\sum_{i=1}^{m} t_i^2} {(\sum_{i=1}^{m} t_i)^2} \nonumber \displaybreak[1] \\
    & =  \frac{\sum_{i=1}^{m} t_i - \sum_{i=1}^{m} t_i^2} {(\frac{1}{m}\sum_{i=1}^{m}t_i)^{2}} = \sum_{i=1}^{m} \frac{t_i(1-t_i)}{(\frac{1}{m}\sum_{i=1}^{m}t_i)^{2}} \nonumber \displaybreak[1]
\end{align}
The second and fourth steps leverage the mean value inequality.
\end{proof}

\begin{theorem}(Lower variance). 
\label{theorem}
Suppose that $\forall d\in \pi_q, \\ \sum_{u \in U} P(e(d)=1|k(d),u)P(u|q) \geq  \frac{1}{|N_q|}\sum_{(\vec
c, u) \in N_q} P(e(d)=1|k(d),u)$, then $\mathbb{V}[\hat{l}_{user-aware}(S|q,N_q)] \leq \mathbb{V}[\hat{l}_{straight}(S|q,N_q)]$. \footnote{Note that this condition is always satisfied in our implementation since we count the user frequencies in the click logs to estimate the user distributions $P(u|q)$.}
\end{theorem}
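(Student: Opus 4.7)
The plan is to mirror the variance computation already displayed in Eq.~(\ref{eq:variance_straight}) but for the user-aware estimator, and then reduce the desired inequality document-by-document to a scalar form that can be closed by the theorem's hypothesis together with Lemma~\ref{lemma}. First, I would write $\hat{l}_{user-aware}(S|q,N_q) = \frac{1}{|N_q|}\sum_{(\vec{c},u)\in N_q} \hat{l}_{user-aware}(S|q,\vec{c})$ and use the same independence structure implicit in Eq.~(\ref{eq:variance_straight}): sessions are independent and, within a session, clicks on different documents are independent. Applying the examination hypothesis $c(d)=e(d)r(d)$, the per-session contribution of $c(d)$ vanishes when $r(d)=0$, while for $r(d)=1$ its variance conditional on the session's user $u$ equals $P(e(d)=1|k(d),u)\bigl(1-P(e(d)=1|k(d),u)\bigr)$.

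Writing $t_u := P(e(d)=1|k(d),u)$ and $T := \sum_{u'\in U} P(e(d)=1|k(d),u')P(u'|q)$ for brevity, the numerator in $\mathbb{V}[\hat{l}_{user-aware}]$ for a relevant document in a session with user $u$ becomes $\lambda^2(d|\pi_q)\,t_u(1-t_u)$ divided by $T^2$, while the corresponding summand in $\mathbb{V}[\hat{l}_{straight}]$ from the final line of Eq.~(\ref{eq:variance_straight}) is $\lambda^2(d|\pi_q)(1-t_u)/t_u$. After cancelling the common factor $\lambda^2(d|\pi_q)/|N_q|^2$ and summing over relevant $d$, the claim reduces, for each relevant document separately, to the scalar inequality $\sum_{(\vec{c},u)\in N_q} t_u(1-t_u)/T^2 \leq \sum_{(\vec{c},u)\in N_q}(1-t_u)/t_u$.

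Now I invoke the theorem's hypothesis, which is precisely $T \geq \frac{1}{|N_q|}\sum_{(\vec{c},u)\in N_q} t_u =: \bar{t}$. Since both sides are positive, $T^2 \geq \bar{t}^2$, and replacing $T^2$ by the smaller $\bar{t}^2$ in the denominator only enlarges the left-hand side. It therefore suffices to show $\sum_{(\vec{c},u)\in N_q} t_u(1-t_u)/\bar{t}^2 \leq \sum_{(\vec{c},u)\in N_q}(1-t_u)/t_u$, which is exactly Lemma~\ref{lemma} applied to the multiset $\{t_u : (\vec{c},u)\in N_q\}$ of size $m = |N_q|$ (note $t_u \in (0,1]$ by the unbiasedness hypothesis of Theorem~\ref{theorem_user-aware_unbiasedness}, which is needed for the lemma to apply).

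The main obstacle, I expect, is not the algebra but the bookkeeping. I will need to (i) justify the additive variance decomposition across both sessions and documents, i.e.\ the conditional independence of clicks within a session and independence across sessions that Eq.~(\ref{eq:variance_straight}) already uses implicitly; (ii) argue that the $r(d)=0$ documents drop out cleanly on both sides so the per-document reduction is legitimate; and (iii) keep the session-level indexing consistent, since Lemma~\ref{lemma} must be applied to the $|N_q|$-element multiset of \emph{per-session} $t_u$'s rather than to the $|U|$-element population of users, and repeated users across sessions must be counted with multiplicity so that the bound $T^2 \geq \bar{t}^2$ aligns verbatim with the theorem's hypothesis.
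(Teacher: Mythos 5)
Your proposal matches the paper's proof essentially step for step: the same session-and-document variance decomposition via the examination hypothesis, the same use of the stated condition to lower-bound the squared denominator by $\bigl(\frac{1}{|N_q|}\sum_{(\vec{c},u)\in N_q} t_u\bigr)^2$, and the same application of Lemma~\ref{lemma} to the $|N_q|$-element multiset of per-session examination probabilities. The bookkeeping points you flag --- independence across sessions and documents, the clean vanishing of $r(d)=0$ terms, and counting repeated users with multiplicity --- are exactly what the paper's displayed derivation carries out, so the argument is correct and not a different route.
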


\begin{proof}
We first derive the variance of our user-aware estimator:
\begin{align}
    \mathbb{V}& \left[ \hat{l}_{user-aware}(S|q,N_q) \right] = \mathbb{V} \left[ \frac{1}{|N_q|}\sum_{(\vec{c},u) \in N_q} \hat{l}_{user-aware}(S|q,\vec{c}) \right] \nonumber \displaybreak[1] \\
    &= \frac{1}{|N_q|^2} \sum_{(\vec{c},u) \in N_q}  \mathbb{V} \left[ \sum_{d\in \pi_q} \frac{\lambda(d|\pi_q)c(d)}{\sum_{u' \in U} P(e(d)=1|k(d),u')P(u'|q)} \right] \nonumber \displaybreak[1] \\
    &= \frac{1}{|N_q|^2} \sum_{d\in \pi_q}\sum_{(\vec{c},u) \in N_q} \frac{\lambda^2(d|\pi_q)}{(\sum_{u' \in U} P(e(d)=1|k(d),u')P(u'|q))^{2}}  \mathbb{V} \left[  c(d) \right] \nonumber \displaybreak[1]\\
    &= \frac{1}{|N_q|^2} \sum_{d\in \pi_q: r(d)=1}\sum_{(\vec{c},u) \in N_q} \nonumber \displaybreak[1]\\
    &\frac{\lambda^2(d|\pi_q) P(e(d)=1|k(d),u) (1 - P(e(d)=1|k(d),u))}{(\sum_{u' \in U} P(e(d)=1|k(d,u')P(u'|q))^{2}} \quad, \nonumber \displaybreak[1]
\end{align}
where the last step utilizes the examination hypothesis in Eq.~(\ref{eq:EH}). For simplicity, let $p(d,u)$ denote $P(e(d)=1|k(d),u)$ and let $p(d)$ denote $\frac{1}{|N_q|}\sum_{(\vec{c}, u) \in N_q} P(e(d)=1|k(d),u)$. We have:
\begin{align}
    \mathbb{V}& \left[ \hat{l}_{user-aware}(S|q,N_q) \right] \nonumber \displaybreak[1]\\
    &= \frac{1}{|N_q|^2} \sum_{d\in \pi_q: r(d)=1}\sum_{(\vec{c}, u) \in N_q} \frac{\lambda^2(d|\pi_q) p(d,u) (1-p(d,u))}{(\sum_{u' \in U} P(e(d)=1|k(d),u')P(u'|q))^{2}} \nonumber \displaybreak[1]\\
    &\leq \frac{1}{|N_q|^2} \sum_{d\in \pi_q: r(d)=1}\sum_{(\vec{c}, u) \in N_q} \frac{\lambda^2(d|\pi_q) p(d,u) (1-p(d,u))}{p^2(d)} \nonumber \displaybreak[1]\\
    &\leq \frac{1}{|N_q|^2} \sum_{d\in \pi_q: r(d)=1}\sum_{(\vec{c}, u) \in N_q} \frac{\lambda^2(d|\pi_q) (1-p(d,n))}{p(d,n)} \nonumber \displaybreak[1]\\
    &= \mathbb{V} \left[ \hat{l}_{straight}(S,|q,N_q) \right] \nonumber \displaybreak[1]
\end{align}
The second step uses the condition and the third step leverages Lemma~\ref{lemma}.
\end{proof}

\vspace{-4mm}

\textbf{Remarks (1)}. From a computational perspective, the essential difference between our user-aware estimator and IPS-PBM lies in the calculation of the propensity for each query-document pair $P(e=1|q,d)$: IPS-PBM leverages all users' average examination probability, i.e., $\sum_{u \in U} P(e=1|k(d),u)P(u)$; while we average the examination probabilities of the users who issued the query, i.e., $\sum_{u \in U} P(e=1|k(d),u)P(u|q)$. In the case of user-aware ULTR, since $u$ and $q$ are not independent, IPS-PBM is biased. Besides, if users only have personalized browsing behaviors but no personalized querying behaviors, our user-aware estimator degrades to IPS-PBM.

\textbf{Remarks (2).} Figure~\ref{fig:causal_graphs_ipsupbm}b and Figure~\ref{fig:causal_graphs_ipsupbm}c illustrate the causal interventions simulated by the straightforward and user-aware estimators, respectively. These two methods block the backdoor paths from $e$ to $c$ in different ways. By identifying the specific user corresponding to each session and utilizing his/her examination probabilities for different positions as propensities, the straightforward estimator cuts off the paths $u \rightarrow e$ and $k \rightarrow e$. On the contrary, by estimating the user distribution under each query and aggregating these users' examination probabilities for different positions, our user-aware estimator cuts off the paths $u \rightarrow q$ and $k \rightarrow e$. In summary, the user-aware estimator balances the examination distributions over documents under each query rather than under each session and thus reduces the variance.

\textbf{Remarks (3).} The user-aware estimator needs to estimate two sets of parameters: user distributions under each query and users' personalized examination probabilities. The user distributions can be estimated by simply counting the users' frequencies for each query. Besides, it is feasible to estimate each user's examination probabilities by simply extending previous examination estimation methods (i.e., online randomization experiments \cite{joachims2017ips} or offline approaches like Regression-EM\cite{wang2018regression-EM} to utilize user-dependent examination parameters. 

\section{Simulation-based Experiments}

In this section, we conduct simulation-based experiments to verify the effectiveness of our user-aware estimator. In general, we aim to answer the following research questions (RQ):
\begin{itemize}
    \item {RQ1:} How effective is the user-aware estimator in mitigating personalized bias?
    \item {RQ2:} How does the user-aware estimator perform when the number of simulated training sessions varies?
    \item {RQ3:} How does the user-aware estimator perform when the differences between users' behaviors vary?
    \item {RQ4:} How effective is the user-aware estimator when the examination propensities are also estimated based on user behavior logs?
\end{itemize}

\subsection{Experimental Setup}
\label{sec:exp_setup}
\textbf{Datasets}. We use two public datasets Yahoo! LETOR set~1 \cite{Yahoo!} and Baidu-ULTR \cite{zou2022Baidu-ULTR} to simulate user behavior logs. Both of these datasets are collected from commercial search engines. Specifically,
Yahoo! LETOR set~1 is a learning-to-rank benchmark dataset and contains 29,336 queries with 710k documents. Each query-document pair has a 5-level human relevance annotation (0-4) and 700-dimensional ranking features. We follow the same data split of training, validation, and testing in the Yahoo! LETOR set 1. 
The Baidu-ULTR dataset is a newly proposed dataset for ULTR. Although it provides real click logs, it does not contain any user information. Thus, we still need to use its 10,402 annotated queries with 757k documents to simulate clicks with user IDs. Each query-document pair has a 5-level human relevance annotation and 782-dimensional ranking features. We randomly divide these annotated queries into training, validation, and test sets in a ratio of 7:1:2.

\textbf{Production ranker}. 
Following Joachims et al. \cite{joachims2017ips}, we use 1\% annotation data to train an SVM-rank \cite{joachims2002SVM-rank, joachims2006SVM-rank_linear_time} to generate the initial ranked lists for each training query. Following Ai et al. \cite{ai2018dla}, we truncate the length of each initial ranked list to 10.

\begin{table*}[htbp]
  \caption{Comparison of different estimators with an MLP ranking model learning from 1,000,000 training sessions simulated on Yahoo! LETOR set~1 and Baidu-ULTR. The best-performing estimator is in bold. $+/-$ indicates a result is significantly better or worse (t-test with p-value < 0.05) than the user-aware estimator.}
  \label{tab:unbiasedness_mlp}
  \vspace{-4mm}
  \renewcommand{\arraystretch}{0.88}
\begin{tabular}{cccccccccc} 
\toprule
Correction Method & nDCG@1 & nDCG@3 &nDCG@5  & nDCG@10 & ERR@1 & ERR@3 & ERR@5 & ERR@10 & MSE\\ \midrule
\multicolumn{10}{c}{Yahoo! LETOR set 1} \\
\hline
production ranker & $0.5681^-$ & $0.6064^-$ & $0.6426^-$ & $0.7031^-$ & $0.2770^-$ & $0.3771^-$ & $0.4031^-$ & $0.4199^-$ & /
\\
ideal & $0.6804^+$ &$0.6905^+$ &$0.7125^+$ &$0.7606^+$ &$0.3448$ &$0.4246^+$ &$0.4466^+$ &$0.4619^+$ & /\\
\hline
naive & $0.6335^-$ &$0.6468^-$ &$0.6714^-$ &$0.7261^-$ &$0.3161^-$ &$0.4012^-$ &$0.4237^-$ &$0.4400^-$ & $0.1153^-$\\
IPS-PBM\cite{joachims2017ips} & $0.6385^-$ &$0.6511^-$ &$0.6774^-$ &$0.7332^-$ &$0.3215^-$ &$0.4026^-$ &$0.4254^-$ &$0.4420^-$ & $0.2212^-$\\
PRS\cite{wang2021prs}& $0.6261^-$ &$0.6400^-$ &$0.6665^-$ &$0.7219^-$ &$0.3042^-$ &$0.3920^-$ &$0.4156^-$ &$0.4321^-$ & / \\
straightforward & $0.6560^-$ &$0.6647^-$ &$0.6903^-$ &$0.7429^-$ &$0.3318^-$ &$0.4115^-$ &$0.4344^-$ &$0.4504^-$ & $0.2226^-$\\
user-aware & \textbf{0.6718} &\textbf{0.6810} &\textbf{0.7056} &\textbf{0.7543} &\textbf{0.3420} &\textbf{0.4217} &\textbf{0.4440} &\textbf{0.4594} & \textbf{0.0593}\\
\hline
\multicolumn{10}{c}{Baidu-ULTR} \\
\hline
production ranker & $0.4540^-$ & $0.4750^-$ & $0.4894^-$ & $0.5222^-$ & $0.1782^-$ & $0.2756^-$ & $0.3060^-$ & $0.3286^-$ & /
\\
ideal & $0.5138$ &$0.5300$ &$0.5437$ &$0.5758$ &$0.1887$ &$0.2882$ &$0.3187$ &$0.3408$ & /\\
\hline
naive& $0.4798^-$ &$0.5064^-$ &$0.5215^-$ &$0.5541^-$ &$0.1760^-$ &$0.2763^-$ &$0.3067^-$ &$0.3291^-$ &$0.1129^-$ \\
IPS-PBM\cite{joachims2017ips}& $0.4970^-$ &$0.5170^-$ &$0.5323^-$ &$0.5640^-$ &$0.1827^-$ &$0.2817^-$ &$0.3124^-$ &$0.3345^-$ &$0.1716^-$\\
PRS\cite{wang2021prs}& $0.4792^-$ &$0.5040^-$ &$0.5158^-$ &$0.5472^-$ &$0.1795^-$ &$0.2770^-$ &$0.3065^-$ &$0.3286^-$ & / \\
straightforward& $0.5127$ &$0.5288$ &$0.5405^-$ &$0.5715$ &$0.1883$ &$0.2874$ &$0.3173$ &$0.3396$ & $0.1014$\\
user-aware& \textbf{0.5183} &\textbf{0.5318} &\textbf{0.5464} &\textbf{0.5744} &\textbf{0.1910} &\textbf{0.2892} &\textbf{0.3200} &\textbf{0.3418} &\textbf{0.0289}\\
\bottomrule
\end{tabular}
\vspace{-5mm}
\end{table*}

\textbf{Click simulation}. Because the public datasets lack real-world user information or user-related features, it is difficult for us to simulate thousands of users with different behaviors. However, previous work on users' behaviors \cite{montanez2014cross-device_search, sharifpour2023analysis-of-query-logs} suggests that users' search behaviors can usually be clustered into several distinct groups based on certain attributes such as education level and device (such as phones and PCs). Besides, Zhang et al. \cite{zhang2022globalorlocalCM} showed that users' browsing behaviors can also be roughly clustered into ten groups. Therefore, we aim to simulate clicks generated by different clusters of users $U = \{u_1, u_2, ..., u_{|U|}\}$ in our simulation-based experiments. The simulation process includes three parts: personalized browsing, personalized query issuing, and shared relevance judgment.

We first model users' personalized browsing behaviors. Following Joachims et al. \cite{joachims2017ips}, we adopt the position-based click model for each $u_i$ to derive the examination probabilities:
\begin{equation}
\label{eq_PBM_exam}
    P(e(d) = 1 | R_q, u_i) =  (\frac{1}{k(d|R_q)})^{\eta_{u_i}} ,
\end{equation}
where $\eta_{u_{i}}$ models the severity of position bias and varies among different user clusters. If not specified, we set $|U|$ to 10 and let $\eta_{u_{i}}$ take the value in $\{2.5, 2.0, 1.8, 1.5, 1.2, 1.0, 0.8, 0.5, 0.2, 0\}$ in sequence by referring to the clustering results of Zhang et al. \cite{zhang2022globalorlocalCM}.

Then, we simulate users' personalized query behaviors. For each user cluster $u_i$, we generate a random multinomial distribution $Pr(u_i) = \{pr_1, pr_2, ..., pr_{|Q|}\}$ over the training queries, where $pr_j$ is a random number, $|Q|$ is the number of unique training queries, and $\sum_j pr_j = 1$.\footnote{Here we let $pr_j$ has a 50\% probability of being 0 to make $Pr(u_i)$ sparser and thus amplify the disparities in user distribution between different queries.} Since Zou et al. \cite{zou2022Baidu-ULTR} showed that empirically there are many more sessions with severe position bias than those with slight position bias, we expect that in our setting more sessions would come from the user clusters with more serious position bias tendencies (larger $\eta_{u_{i}}$). Therefore, we make the number of queries $m$ submitted by users satisfy the following relationship:

\begin{equation}
\forall i < |U|, m_{u_i} = 1.25 \times m_{u_{i+1}} \quad ,
\end{equation}
where $m_{u_i}$ is the number of queries submitted by $u_i$. Next, we can follow the distribution $Pr(u_i)$ to sample $u_i$'s issued queries $Q_{u_i} = \{q_1, q_2, ..., q_{m_{u_i}}\}$.

As for the sampling of shared relevance judgments, similar to Gupta et al. \cite{gupta2023safe}, we apply the following transformation from relevance annotations to relevance probabilities:
\begin{equation}
\label{eq:relevance_sampling}
P(r(q,d) = 1) = \epsilon + (1 - \epsilon) \times 0.25 \times y(q,d) \quad,
\end{equation}
where $y$ denotes the relevance annotation of query-document pair $(q, d)$, and the $\epsilon$ parameter models click noises so that irrelevant documents (y = 0) have a non-zero probability of being perceived as relevant. Following Ai et al. \cite{ai2018dla}, we fix $\epsilon$ to 0.1 in our experiments.

Finally, users' clicks are simulated following the examination hypothesis (EH) shown in Eq.~(\ref{eq:EH}). Please note that the click-through rate at each position of our synthetic clicks on Baidu-ULTR (0.423, 0.158, 0.098, 0.075, 0.063, 0.053, 0.048, 0.044, 0.041 0.039) is very close to the statistical value of its real logs \cite{zou2022Baidu-ULTR}, which can demonstrate the rationality of our click simulation to a certain extent.

\textbf{Baselines}. To verify the effectiveness of our user-aware estimator, we compare it with the following IPS-based ULTR baselines. First, we implement a \emph{naive} baseline that uses the raw click data to train the ranking model in Eq.~(\ref{eq:naive_estimator}). Second, we reproduce the user-oblivious \emph{IPS-PBM} \cite{joachims2017ips} estimator in Eq.~(\ref{eq:IPS_PBM_estimator}). Third, we reproduce the \emph{PRS} \cite{wang2021prs} method that integrates the inverse propensity weight on both the clicked documents and the non-clicked ones to estimate pairwise ranking objectives. Fourth, we implement the \emph{straightforward} estimator in Eq.~(\ref{eq:straightforward estimator}) that is unbiased but suffers from high variance. Besides, we use 100\% annotation data to train an \emph{ideal} ranking model that can be regarded as the skyline. In addition, for RQ4, we further compare our user-aware estimator with the Contextual Position-Based Model (\emph{CPBM}) \cite{fang2019CPBM} that harvests interventions of logging policies to estimate examination probabilities dependent on both position and contextual information.\footnote{For the differences between CPBM and our user-aware method, please see Section~\ref{sec:related work}.}

\textbf{Model training}.
We count the user frequencies in the click logs to estimate the user distributions $P(u|q)$. In answering RQ1-3, following Oosterhuis et al. \cite{oosterhuis2020policy-aware} and Gupta et al. \cite{gupta2023safe}, we assume that the users' examination probabilities are known (i.e., the parameters used in Eq.~(\ref{eq_PBM_exam})) for all the estimators.\footnote{The estimation of users' examination probabilities is not the focus of this paper and has been addressed by previous work \cite{wang2016ips, joachims2017ips, wang2018regression-EM, ai2018dla}.} When answering RQ4, we utilize the standard Regression-EM \cite{wang2018regression-EM} algorithm to estimate ``average'' examination probabilities for IPS-PBM and PRS. But for straightforward and user-aware, we implement a personalized Regression-EM algorithm that sets different examination parameters for each user. As for the ranking model, we implement both a linear ranking model as Joachims et al. \cite{joachims2017ips} and a Multilayer Perceptron (MLP) ranking model as Ai et al. \cite{ai2018dla}. As for the ranking objective, we adopt a differentiable list-wise loss\footnote{This loss function is used for all the baselines except PRS. For PRS, we use the same pairwise lambda loss as Wang et al. \cite{wang2021prs}.}: 
\begin{equation}
\label{eq:ranking loss}
    l(S|q) = -\sum_{d\in \pi_q}   {\rm log} \frac{e^{s(d)}}{\sum_{z\in \pi_q} e^{s(z)}} \times 0.25 \times y(d) ,
\end{equation}
where $s(d)$ denotes the ranking score for document $d$ outputted by the ranking model $S$, and $y$ denotes the relevance annotation.

The user-aware estimator only increases little computational cost compared with IPS-PBM. For training, our estimator differs from IPS-PBM in the following steps: (1) We calculate the user distributions by counting user frequencies in the click log, which traverses the dataset once. (2) When offline estimating users' examination probabilities, the total amount of computation required for our personalized Regression-EM is similar to that of standard Regression-EM. (3) We calculate user-weighted propensities for each query, which is a simple process and only takes a small amount of computation. The additional cost of these three steps is negligible when compared to the training time of the ranking model itself. For inference, only the trained ranking model is used and there is no additional computational cost.

\textbf{Code.} Our code for click simulation and model implementation is available at a Github repository \href{https://github.com/Diligentspring/Personalized-Bias}{Personalized-Bias}.\footnote{\href{https://github.com/Diligentspring/Personalized-Bias}{https://github.com/Diligentspring/Personalized-Bias.}}

\subsection{RQ1: How effective is the user-aware estimator in mitigating personalized bias?}
Following the described simulation settings, we first generated 1 million synthetic training sessions on both Yahoo! LETOR and Baidu-ULTR datasets, respectively. Then, we utilized different estimators to estimate the ranking loss in Eq.~(\ref{eq:ranking loss}) on the training set and trained the MLP ranking model. Finally, we evaluated their ranking performance on the test set with human relevance annotations. The metrics we used include nDCG\cite{jarvelin2002cumulated} and ERR \cite{chapelle2009expected} at positions 1, 3, 5, and 10. 
Besides, we computed the mean-square error (MSE) between the estimated probabilities of being relevant $P(r=1) = P(c=1)/P(e=1)$ by each debiasing method and the oracle values in Eq.~(\ref{eq:relevance_sampling}) for the training query-document pairs.\footnote{PRS directly estimates pairwise loss, so we cannot compute MSE for PRS.} The results are shown in Table~\ref{tab:unbiasedness_mlp}, where the production ranker refers to the 1\% SVM-rank that generates the initial ranked lists for the training queries. We can see that the user-aware estimator outperforms all the baselines in terms of all the metrics on both semi-synthetic datasets, which demonstrates its effectiveness when facing personalized bias. Although the straightforward estimator is also theoretically unbiased, its performance is worse than the user-aware estimator because of the high variance problem. The user-oblivious estimators IPS-PBM and PRS can not address the personalized bias and thus perform significantly worse than the straightforward and user-aware estimators. 

\begin{figure}[tbp]
    \centering
    \includegraphics[width=0.90\linewidth, trim=20 0 20 0, clip]{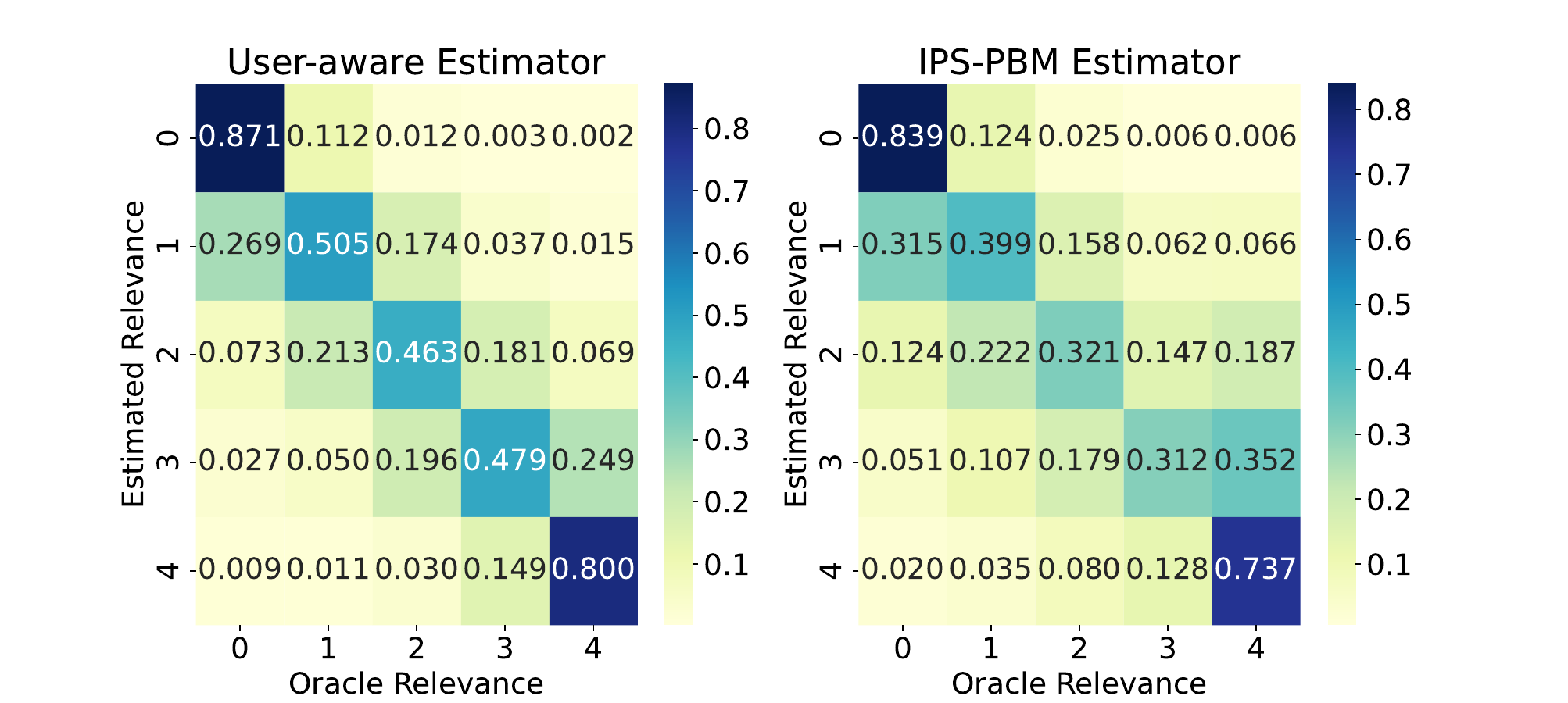}
    \vspace{-4mm}
    \caption{Confusion matrix of estimated relevance for training query-document pairs on Yahoo! LETOR.}
    \label{fig:heatmap}
    \vspace{-5mm}
\end{figure}

\begin{figure}[tbp]
    \centering
    \includegraphics[width=0.90\linewidth, trim=15 0 15 20, clip]{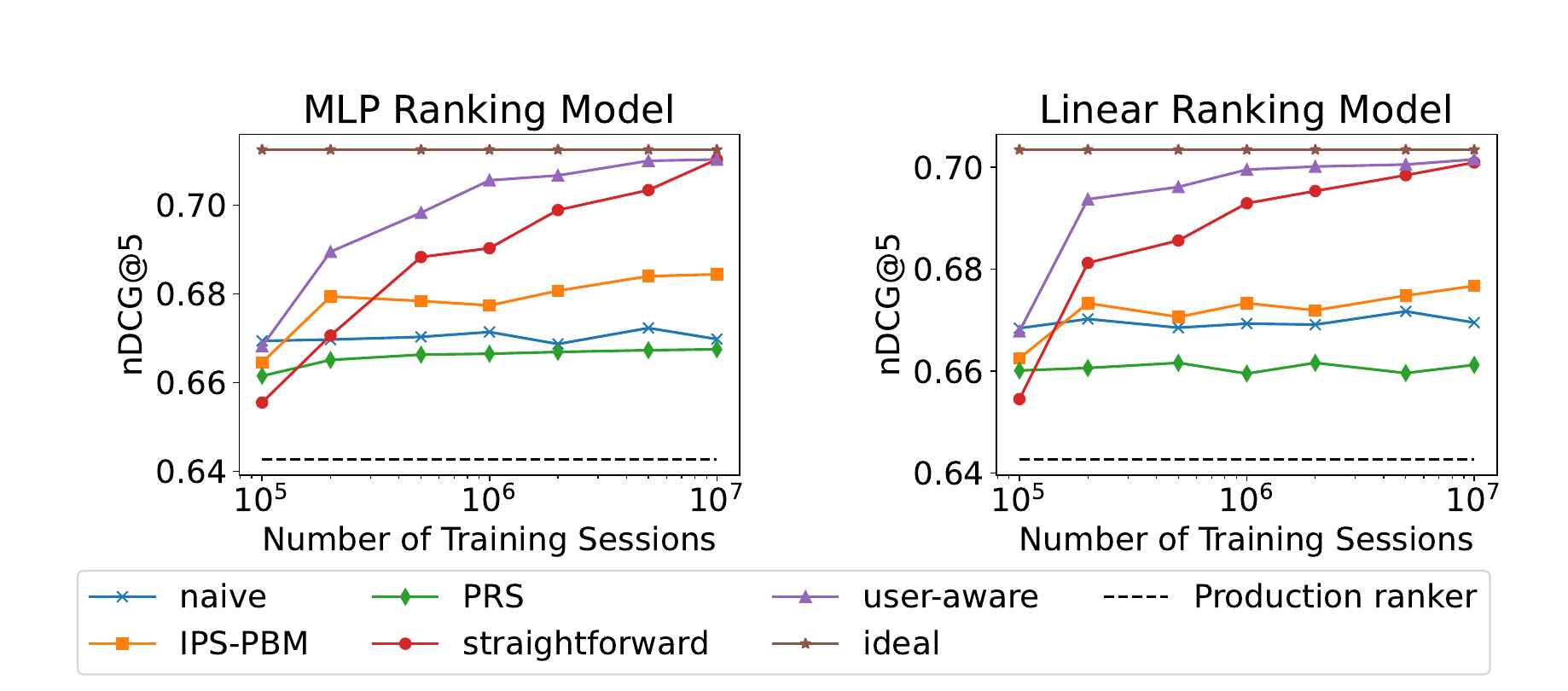}
    \vspace{-4mm}
    \caption{nDCG@5 scores of the estimators learning from various numbers of simulated training sessions on Yahoo!.}
    \label{fig:lower_variance}
    \vspace{-5mm}
\end{figure}

To more intuitively compare the debiasing effect of the user-aware and IPS-PBM estimators, we plot the confusion matrix between their relevance estimates and the oracle values as Figure~\ref{fig:heatmap}.\footnote{The relevance estimates are transformed from the estimated relevant probabilities.} User-aware estimates the relevance more accurately than IPS-PBM, especially for those query-document pairs with relevance annotations between 1 and 3. 

\subsection{RQ2: How does the user-aware estimator perform when the number of simulated training sessions varies?}
We simulated different numbers of training sessions on Yahoo! LETOR and compared the user-aware estimator with the baselines. We trained both the linear and MLP ranking models and evaluated their nDCG@5 scores as shown in Figure~\ref{fig:lower_variance}. When the number of training sessions is larger than 100,000 (which is easy to achieve in Web search), the user-aware estimator always outperforms the user-oblivious and naive baselines. As the number of training sessions increases, the user-aware and straightforward estimators will gradually converge to the ideal skyline, which indicates that they can obtain unbiased estimates of the ranking objective. It is worth noting that the user-aware estimator converges faster than the straightforward estimator due to its lower variance. 

\subsection{RQ3:How does the user-aware estimator perform when the differences between users' behaviors vary?}

To vary the differences between users’ behaviors, we set up different numbers of user clusters ($|U| \in \{5, 10, 20\}$). when $|U| = 5$,  $\eta_{u_{i}}$ in Eq.~(\ref{eq_PBM_exam}) takes value in $\{2.5, 2, 1, 0.8, 0\}$. When $|U| = 20$, $\eta_{u_{i}}$ takes value in $\{2.5, 2.4, 2.2, 2, 1.9, 1.8, 1.6, 1.5, 1.4, 1.2, 1.1, 1, 0.9, 0.8, 0.6, 0.5,$ $ 0.4, 0.2, 0.1, 0\}$. We still simulated 1 million training sessions on Yahoo! LETOR and trained the MLP ranking model. Table \ref{tab:user_cluster} shows the nDCG@5 scores of different estimators. When dealing with different numbers of user clusters, our user-aware estimator always achieves the best performance, which indicates its robustness to different degrees of personalized bias. And please note that our simulation setting of $|U| \in \{5, 10, 20\}$ can cover many realistic scenarios, such as that in \cite{montanez2014cross-device_search} and \cite{sharifpour2023analysis-of-query-logs} where the number of user clusters is no more than 6.

\begin{table}[tbp]
    \centering
    \caption{nDCG@5 scores of different estimators dealing with various user clusters. Same convention as in Table~\ref{tab:unbiasedness_mlp}.}
    \vspace{-3mm}
    \label{tab:user_cluster}
    \renewcommand{\arraystretch}{0.85}
    \begin{tabular}{cccc}
    \toprule
        \multirow{2}{*}{Correction Method} & \multicolumn{3}{c}{User Cluster Number} \\
        \cline{2-4}
        & 5 & 10 & 20\\
        \midrule
        production ranker & \multicolumn{3}{c}{$0.6426^-$} \\
        ideal & \multicolumn{3}{c}{$0.7125^+$} \\
        \hline
        naive & $0.6758^-$ & $0.6714^-$ & $0.6520^-$ \\
        IPS-PBM\cite{joachims2017ips} & $0.6835^-$ &$0.6774^- $  & $0.6833^- $\\
        PRS\cite{wang2021prs} & $0.6721^-$ & $0.6665^-$ & $0.6594^-$ \\
        straightforward & $0.6929^-$ & $0.6903^-$ & $0.6925^-$ \\
        user-aware & $\textbf{0.6980}$ & $\textbf{0.7056}$ & $\textbf{0.7042}$ \\
        \bottomrule
    \end{tabular}
    \vspace{-3mm}
\end{table}

\subsection{RQ4: How effective is the user-aware estimator when the examination propensities are also estimated based on user behavior logs?}
\label{Sec:RQ4}

\begin{table}[tbp]
    \centering
    \caption{nDCG@5 scores of different estimators with estimated examination probabilities on Yahoo! LETOR. Same convention as in Table~\ref{tab:unbiasedness_mlp}.}
    \label{tab:Regression-EM}
    \vspace{-3mm}
    \renewcommand{\arraystretch}{0.85}
    \begin{tabular}{ccc}
    \toprule
        \multirow{2}{*}{Correction Method} & \multicolumn{2}{c}{Ranking Model} \\
        \cline{2-3}
        & MLP & Linear\\
        \midrule
        ideal & $0.7125^+$ & $0.7034^+$ \\
        \hline
        naive & $0.6714^-$ & $0.6693^-$ \\
        IPS-PBM\cite{joachims2017ips} &$0.6585^- $  & $0.6415^- $\\
        PRS\cite{wang2021prs} & $0.6659^-$ & $0.6588^-$ \\
        CPBM\cite{fang2019CPBM} & $0.6812^-$ & $0.6736^-$ \\
        straightforward & $0.6903^-$ & $0.6899^-$ \\
        user-aware & $\mathbf{0.6998}$ & $\mathbf{0.6927}$ \\
        \bottomrule
    \end{tabular}
    \vspace{-5mm}
\end{table}

Since online randomization experiments may not be possible in practical applications, the personalized examination probabilities may be unknown and need to be estimated from click logs. To evaluate the user-aware estimator in fully offline settings, we still simulated 1 million sessions on Yahoo! LETOR and utilized a personalized Regression-EM to estimate each user cluster's position-dependent examination probabilities. For IPS-PBM and PRS, we utilize the standard Regression-EM algorithm to estimate user-oblivious examination propensities. When reproducing the contextual PBM (CPBM) \cite{fang2019CPBM} baseline, we used one-hot user ID as the context vector and made plenty of interventions on the initial ranked lists to enable its examination estimation. The nDCG@5 scores of different ULTR methods in this setting are shown in Table~\ref{tab:Regression-EM}. Without the oracle examination probabilities, our user-aware estimator still achieves the best performance, which demonstrates its applicability in fully offline settings. In addition, even using additional intervention data, CPBM performs worse than our user-aware estimator because it uses session-level propensities to reweigh clicks and thus suffers from high variance.


\section{Real-world Experiments}
\label{sec:real-world}

In this section, we conducted experiments on a real-world dataset collected from a commercial search engine. The training set contains 299,417 search sessions from 5,168 users, involving 130,788 unique queries and 1,306,982 documents. We further obtain 5-level relevance annotations for 360 queries and 14,762 corresponding query-document pairs. Each query-document pair is represented by 90-dimensional ranking features. We split the annotated queries into validation and test sets in a ratio of 1:6.

\begin{figure}[tbp]
    \centering
    \vspace{-5mm}
    \begin{subfigure}[b]{0.43\linewidth}
        \centering
        \includegraphics[width=0.85\linewidth, trim=5 5 20 20, clip]{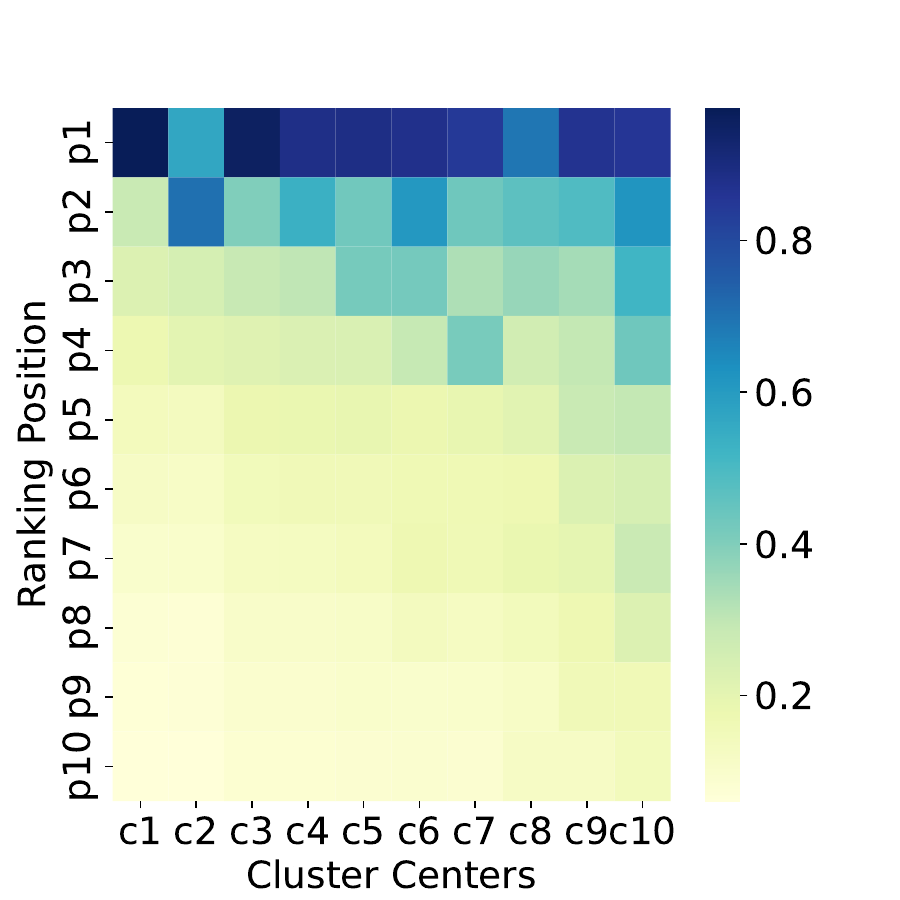}
        \vspace{-2mm}
        \caption{}
        \label{fig:user_cluster}
    \end{subfigure}
    \hfill
    \begin{subfigure}[b]{0.55\linewidth}
        \centering
        \includegraphics[width=\linewidth, trim=5 5 5 5, clip]{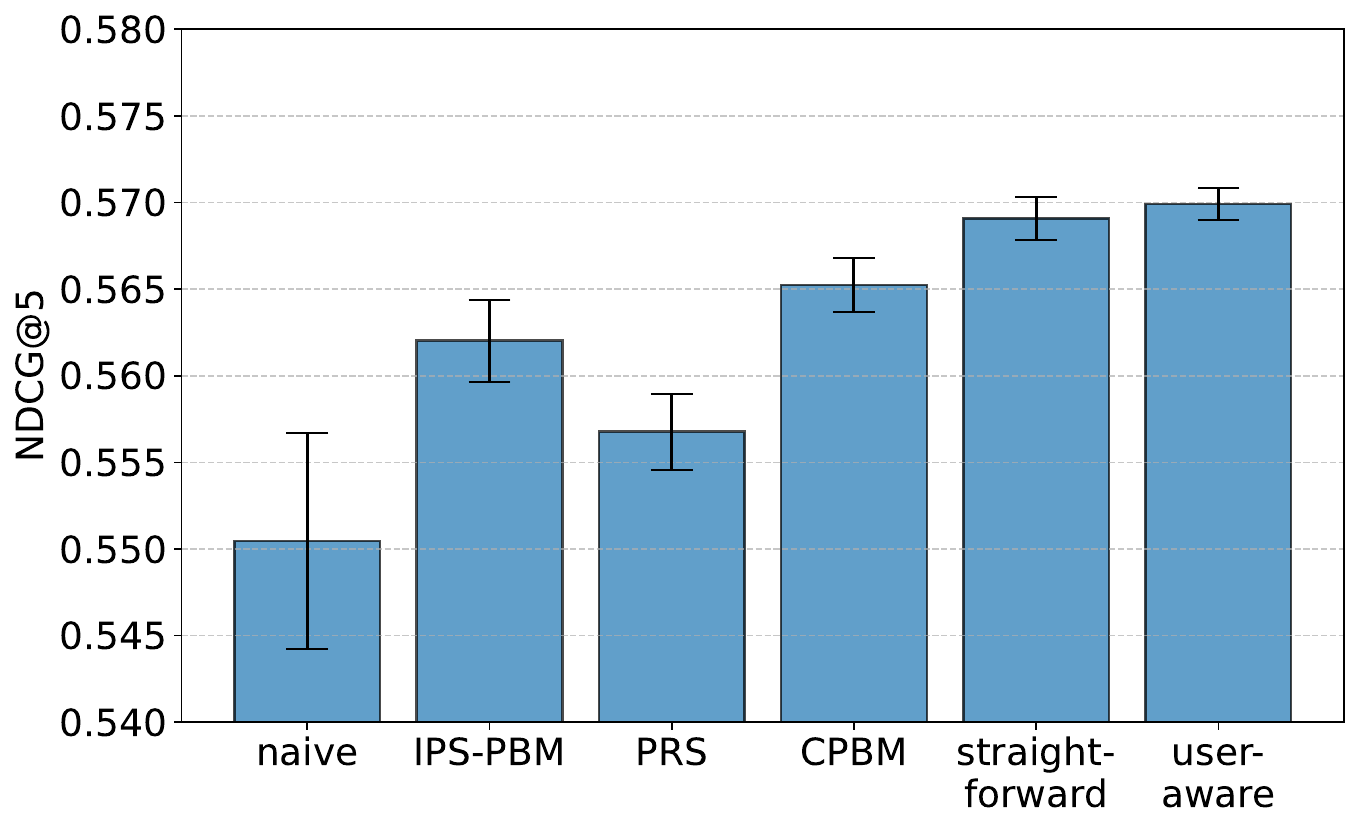}
        \vspace{-5mm}
        \caption{}
        \label{fig:ndcg5_comparison}
    \end{subfigure}
    \vspace{-5mm}
    \caption{(a) Clusters of users' examination propensities. The user clusters are sorted according to the mean position of examination. ‘pi’ denotes the $i$-th ranking position. ‘ci’ denotes the cluster center of the $i$-th user group. (b) nDCG@5 scores with 95\% confidence intervals of different estimators on the real-world dataset. Results are averaged over 5 runs.}
    \vspace{-5mm}
\end{figure}

We first utilized the same personalized Regression-EM \cite{wang2018regression-EM} algorithm in Section~\ref{Sec:RQ4} to estimate the position-dependent examination propensities of each user. To visualize the differences in users' browsing behaviors, we used the K-Means \cite{lloyd1982Kmeans} algorithm to cluster users' examination propensities in Figure~\ref{fig:user_cluster} and show the average propensities for each cluster. We can observe that some user clusters like \verb|c1| are severely affected by the position and mainly focus on the first result. Some user clusters like \verb|c10| tend to examine more results and are more likely to examine the lower-ranked documents than c1. In addition, some user clusters, such as \verb|c2| and \verb|c7|, may prefer to examine documents at middle positions like \verb|p2| and \verb|p4|. These results are consistent with those of Zhang et al. \cite{zhang2022globalorlocalCM} and demonstrate that users have personalized browsing behaviors. 

Then, we estimated the ranking loss in Eq.~(\ref{eq:ranking loss}) on the training set with different ULTR methods and trained the MLP ranking model. The nDCG@5 scores on the test set are shown in Figure~\ref{fig:ndcg5_comparison}. Our proposed user-aware estimator still outperforms all the baselines, which verifies its effectiveness in mitigating personalized bias in real-world scenarios. 

\section{Related Work}
\label{sec:related work}
\textbf{Personalized Search and Browsing Behaviors.} Many studies show that users have personalized search and browsing behaviors. Montanez et al. \cite{montanez2014cross-device_search} showed that users using different Internet devices (including PC, smartphone, tablet, and gaming console) will have different topic interests and issue different queries. Sharifpour et al. \cite{sharifpour2023analysis-of-query-logs} applied cluster analysis on users' behavior logs and divided six user groups with different search behaviors based on their domain expertise and intent. Zhang et al. \cite{zhang2022globalorlocalCM} argued that different users probably examine the documents in personalized ways because personalizing the users' examination parameters improves the performance of click models in predicting real-world clicks.

\textbf{Mitigating Biases in User Behavior Data.} Vardasbi et al. \cite{vardasbi2020cm-ips} extended the IPS-PBM estimator to mitigate position bias under the assumption of cascade-based models such as dependent click model (DCM) \cite{guo2009dcm} and click chain model (CCM) \cite{guo2009CCM}. Agarwal et al. \cite{agarwal2019trustbias} and Vardasbi et al.\cite{vardasbi2020affine} studied the trust bias that users may overestimate the relevance of the top-ranked results due to their trust in search engines' ranking performance. Besides, Wang et al. \cite{chen2021iobm} proposed an embedding-based method to address the interactional observation bias, which considers the interactions between examinations/clicks when estimating the examination probabilities.

The most similar work to ours is that of Fang et al. \cite{fang2019CPBM}, which focused on the context-dependent examination estimation problem and argued that examination probabilities should also depend on contextual information related to each query (such as query length, candidate set size, and user age). We discuss the differences between \cite{fang2019CPBM} and our work as follows: (1) We are the first to analyze the causal effect of the user variable in ULTR and identify the personalized bias caused by the combination of users’ personalized search and examination behaviors. In contrast, Fang et al. \cite{fang2019CPBM} mainly focused on the impact of query-related features on examination probabilities, and their real-world experiments did not use any user features. As for their simulation experiments, the user-related features of each session were randomly sampled from a normal distribution, which would not form personalized bias because the user variable $u$ does not affect query $q$. (2) When estimating ranking objectives, Fang et al. \cite{fang2019CPBM} balances the examination distributions at the session level like IPS-PBM \cite{joachims2017ips}, so it is similar to the straightforward estimator and suffers from high variance. To overcome this problem, in this paper, we propose a novel user-aware estimator utilizing $do$-calculus and theoretically prove its unbiasedness and lower variance. (3) Fang et al. \cite{fang2019CPBM} estimated examination bias by harvesting the intervention done by multiple logging policies, which would be affected if there is only one deterministic logging policy. In contrast, we used Regression-EM to offline estimate examination probabilities, which is more general and easier to extend to other user behavior models.


\textbf{Propensity Estimation in Unbiased Learning to Rank.} Joachims et al. \cite{joachims2017ips} estimated the relative examination probability of two positions $P(e=1|k_1) /P(e=1|k_2)$ by conducting an online experiment that randomly swaps the results at rank $k_1$ with the results at rank $k_2$ and counts the relative click-through rates $P(c=1|k_1) /P(c=1|k_2)$. However, the online experiments can negatively affect users’ search experience. In response to this shortcoming, Wang et al. \cite{wang2018regression-EM} proposed the regression-based Expectation-Maximization (Regression-EM) algorithm to estimate examination probabilities from historical click logs. Regression-EM improves the standard EM algorithm by using a regression function to estimate the relevance of query-document pairs with their features.
\vspace{-2mm}

\section{Conclusion and Future Work}
In this paper, we formulate the user-aware ULTR problem from a causal view and identify an unsolved personalized bias caused by users' personalized search and browsing behaviors. To mitigate the personalized bias, we propose a novel user-aware inverse-propensity-score estimator that balances the examination distributions over documents under each query rather than under each session. Theoretical analyses show that our user-aware estimator is unbiased under certain mild assumptions and has a lower variance than the straightforward extension of IPS-PBM. Extensive experiments on semi-synthetic and real-world datasets demonstrate the superiority of our user-aware estimator over the existing ULTR baselines and the straightforward estimator in dealing with personalized bias. Our work is a first step toward considering users' personalized behaviors in ULTR. In the future, it would be interesting to solve the user-aware ULTR problem in scenarios where users have more complex click behaviors. For example, we can further assume that users also have personalized relevance judgments or they may browse the SERPs in a more complicated manner.

\begin{acks}
This research was supported by the Natural Science Foundation of China (61902209, 62377044), Intelligent Social Governance Platform, Major Innovation \& Planning Interdisciplinary Platform for the ``Double-First Class" Initiative, Renmin University of China, the Fundamental Research Funds for the Central Universities, the Research Funds of Renmin University of China (22XNKJ15), and Beijing Nova Program. 

\end{acks}

\section{GenAI Usage Disclosure}
In this work, GenAI was only utilized to help to annotate the relevance labels on the test set of the real dataset in Section~\ref{sec:real-world}. We manually refined the annotation process of GenAI carefully, so that the Cohen's Kappa coefficient between the GenAI and manual annotators reached 0.6443, which is quite reliable.

\bibliographystyle{ACM-Reference-Format}
\balance
\bibliography{sample-base}

\end{document}